\documentclass[a4paper,11pt,english]{amsart}

\usepackage{amsaddr}

\makeatletter
\renewcommand{\email}[2][]{%
  \ifx\emails\@empty\relax\else{\g@addto@macro\emails{,\space}}\fi%
  \@ifnotempty{#1}{\g@addto@macro\emails{\textrm{(#1)}\space}}%
  \g@addto@macro\emails{#2}%
}
\makeatother

\usepackage[paperwidth=210mm,paperheight=297mm,inner=3cm,outer=3cm,top=2.5cm,bottom=2.6cm]{geometry}

\linespread{1.1}

\usepackage[utf8x]{inputenc}

\usepackage{parskip}
\setlength{\parskip}{0pt}
\setlength{\parindent}{20pt}

\usepackage{amsmath}
\usepackage[pdftex]{graphicx}
\usepackage{amssymb,mathtools}
\usepackage{upgreek}
\usepackage{cleveref}
\usepackage{babel}
\usepackage{caption}
\usepackage{etex}
\usepackage{fancyvrb}
\usepackage{tikz,tikz-3dplot}
\usepackage{pgfplots}
\usetikzlibrary{calc,datavisualization}
\usetikzlibrary{datavisualization.formats.functions}
\usepackage[ruled,vlined]{algorithm2e}

\usepackage{nicefrac,xfrac}

\usepackage{amsthm}
\newtheorem{thm}{Theorem}

\usetikzlibrary{shadings}
\tdplotsetmaincoords{80}{45}
\tdplotsetrotatedcoords{-90}{180}{-90}

\newcommand{\bra}[1]{{\langle{#1}\rvert}}
\newcommand{\braket}[2]{{\langle{#1}\rvert{#2}\rangle}}
\newcommand{\ket}[1]{{\lvert{#1}\rangle}}

\tikzset{surface/.style={draw=blue!70!black, fill=blue!40!white, fill opacity=.6}}



\numberwithin{equation}{section}

\title[Quantum Random Number Generator based on Violations of CHSH-3]{Quantum Random Number Generator based on Violations of the Free CHSH-3 Inequality}


\date{\today}
\author{Don Jean Baptiste Anoman}
\email{don.anoman@unilim.fr}
\address{Univ. Limoges, CNRS, XLIM, UMR\,7252}
\author{Fran\c{c}ois Arnault}
\email{arnault@unilim.fr}
\address{Univ. Limoges, CNRS, XLIM, UMR\,7252}
\author{Simone Naldi}
\email{simone.naldi@unilim.fr}
\address{Univ. Limoges, CNRS, XLIM, UMR\,7252}

\begin{document}


\maketitle
\bigskip
\bigskip
\bigskip
\bigskip
\begin{abstract}

We describe a protocol for generating random numbers based on the existence of quantum violations
of a free version of Clauser-Horne-Shimony-Holt inequality for qutrit, namely CHSH-3. Our method uses semidefinite
programming relaxations to compute such violations. In a standard setting the CHSH-3 inequality
involves two separated qutrits and compatible measurement, that is, commuting with each other,
yielding the known quantum bound of $1+\sqrt{11/3} \approx 2.9149$ .
In our framework, $d$-dimensional quantum systems (qudits) where $d$ is not fixed {\it a priori},
and measurement operators possibly not compatible, are allowed. This loss of constraints yields a
higher value for the maximum expectation of CHSH-3. Based on such upper bound on the violation of CHSH-3, we
develop a random number generator with only one party.
Our protocol generates a maximal entropy and its security is based, through self testing arguments, on the attainability of the
maximal violation of the free CHSH-3 for quantum systems.

\end{abstract}
\newpage

\newpage
\section{ Introduction}
Random number generation is a central question in computer science and related domains like cryptography
and information security. One strategy for dealing with this problem is based on intrinsically
random theories, such as quantum physics. A crucial need is to be able to distinguish between pure
randomness generated by the parties and noise that can derive from imperfections of the setting or even
from a malicious adversary.

A strategy deriving from quantum physics is based on Bell inequalities \cite{bell1964einstein}. These are
inequalities involving the expected values (or probabilities) of outcomes of measurements that hold in
classical mechanics but that can be violated in a quantum setting. This is the case for the CHSH-2
inequality \cite{chsh2}, where it is shown that the quantum bound ($2\sqrt{2}$) is higher than the
classical one, which is $2$, see for instance \cite{leTsar, steph}.

For CHSH-2, the violation is related to the non-locality of quantum physics. Indeed, the complete 
description of a quantum system is not only related to its local environment, but can be correlated to 
a very far system, due to entanglement. Non-locality, in addition to the default random character
of quantum physics, is the basis of the random number generator in \cite{pam}, where a protocol is
developed that relies on a two-parties configuration whose security is yielded by the violation of
CHSH-2. Moreover, thanks to the relation between the violation and the output entropy, the protocol
is proven to be device-independent. This means that in a quantum setting, the user can have a guarantee
on the quality of the randomness, without knowledge on the precise states and measurements that have
been performed.

The generation of random numbers can also be obtained with a one-party system, see for instance
\cite{randomKcbs} for the case of a unique qutrit. This protocol is based on quantum contextuality,
that is, on the property that the measurement result of a quantum observable depends on the set of 
compatible observers. In \cite{randomKcbs} non-contextuality is verified by the KCBS inequality 
\cite{kcbs}. The security of the protocol relies on the fact that a violation of the KCBS inequality 
yields a strictly positive entropy. Such an entropy reaches the maximum for the maximum value of the 
violation of the Bell inequality (see \cite[Fig.\,1]{kcbs}).

In this paper we present a protocol for the generation of random numbers which we call Gabriel. It is 
based on intrinsic randomness of quantum formalism which we test by violation of a free variant of the CHSH-3 Bell inequality. Indeed we use state and measures allowing us to get
the  bound $4$ of a free CHSH-3 expression. This is a non algebraic bound of the latter expression while it is the algebraic bound of the initial expression. This value is greater than
the quantum bound which is available in the literature \cite{FCZ}, which is explained by the fact that
we do not impose that the observables commute to each other (between two parties) or {\it a priori} bounds on the dimension of
the Hilbert space they act on. This viewpoint is motivated by the result in \cite{amblard} where the
it is shown how to implement some products of non-commuting observables.  This protocol has the particularity that the quantum configuration reaching the expected value of $4$ for the
free CHSH-3 yields an entropy of $1$ trit for each trit which is produced.

The paper is organized as follows. \Cref{sec:chsh3} contains a reformulation of the CHSH-3 inequality in a free setting, that is, without commutativity and dimensional constraints involving the observables. In \Cref{sec:sdp} we describe an approach based on convex semidefinite relaxations to compute bounds on 
the violation of CHSH-3. We finally describe the protocol in Section \ref{sec:protocol} and prove
its security giving self testing arguments.
\newpage
\section{CHSH-3 inequality and its free variant}
\label{sec:chsh3}

\subsection{Original setting}
\label{ssec:original}
Clauser-Horne-Shimony-Holt inequalities in the original context involved $2$ parties, $2$ measurements per 
party and $2$ outcomes per measurement (compactly named CHSH-2). Further many authors have worked on 
generalizations with many measurements (for instance in \cite{steph}) and possibly different values for 
the number $d$ of outcomes.

In \cite{chsh3} the case with $3$ outcomes is defined. The authors of \cite{adgl} show that the CHSH-3 expression can be rewritten
as in \cite[Eq.\,3]{cglmp}, which we detail below. Both parties can perform two measurements that are represented by random
variables $A_1,A_2$ for Alice and $B_1,B_2$ for Bob. In a multiplicative setting (as in \cite{FCZ}), the outcomes can be represented by the cubic roots
of unity $1, \omega, \omega^2$, with $\omega$ satisfying $\omega^2+\omega+1=0$. 

The corresponding Bell expression is
\begin{equation}
\begin{split}
I_3 = \; & P(A_1 = B_1) + P(A_2 = \omega^2B_1) + P(A_2 = B_2)+ P(A_1 = B_2) \\
& - P(A_1 = \omega^2 B_1) - P(A_2 = B_1) - P(A_2 = \omega^2 B_2) - P(A_1 = \omega B_2) 
\end{split}
\label{proba}
\end{equation}
or more explicitely
\begin{equation}
\begin{split}
I_3 = &
\,P(1,1 | A_1 B_1) + P(\omega,\omega | A_1 B_1) + P(\omega^2,\omega^2 | A_1 B_1)+P(\omega^2,1 | A_2 B_1) \\
+ & \,P(1,\omega|A_2 B_1) + P(\omega,\omega^2 |A_2 B_1)+P(1,1|A_2 B_2) + P(\omega,\omega|A_2 B_2) \\
+ & \,P(\omega^2,\omega^2|A_2 B_2)+P(1,1|A_1 B_2) + P(\omega,\omega|A_1 B_2)+ P(\omega^2,\omega^2|A_1 B_2) \\
- & \,P(1,\omega|A_1 B_1) - P(\omega,\omega^2|A_1 B_1) - P(\omega^2,1|A_1 B_1)-P(1,1 | A_2 B_1) \\
- & \,P(\omega,\omega|A_2 B_1)-P(\omega^2,\omega^2|A_2 B_1)-P(1,\omega|A_2 B_2)-P(\omega,\omega^2|A_2 B_2) \\
- & \,P(\omega^2,1|A_2 B_2)-P(\omega,1 | A_1 B_2) - P(\omega^2,\omega| A_1 B_2)- P(1,\omega^2|A_1 B_2)
\end{split}
\end{equation}
where $P(\omega^k,\omega^\ell|A_iB_j)$ denotes the probability of getting $\omega^k,\omega^\ell$ with
measurements $A_i,B_j$. The classical bound of $2$ is satisfied in a local realistic setting
\cite{cglmp} and establishes what one generally calls the {\it CHSH-3 inequality}: $I_3 \leq 2$.

In a quantum setting, $A_1,A_2,B_1$ and $B_2$ are observables acting on a three-dimensional Hilbert space 
$\mathbb{H}$ with eigenvalues $1, \omega, \omega^2$ defined as above. The corresponding eigenvectors are
denoted by $\ket{a_{i,1}}, \ket{a_{i,\omega}}, \ket{a_{i,\omega^2}}$ for $A_i$, $i=1,2$, and similarly 
for $B_1,B_2$. This allows us to define the projectors 
$$
\begin{array}{rrrr}
A_{1,1} = \ket{a_{1,1}}\bra{a_{1,1}} & A_{2,1} = \ket{a_{2,1}}\bra{a_{2,1}} & B_{1,1} = \ket{b_{1,1}}\bra{b_{1,1}} & B_{2,1} = \ket{b_{2,1}}\bra{b_{2,1}} \\
A_{1,\omega} = \ket{a_{1,\omega}}\bra{a_{1,\omega}} & A_{2,\omega} = \ket{a_{2,\omega}}\bra{a_{2,\omega}} & B_{1,\omega} = \ket{b_{1,\omega}}\bra{b_{1,\omega}} & B_{2,\omega} = \ket{b_{2,\omega}}\bra{b_{2,\omega}} \\
A_{1,\omega^2} = \ket{a_{1,\omega^2}}\bra{a_{1,\omega^2}} & A_{2,\omega^2} = \ket{a_{2,\omega^2}}\bra{a_{2,\omega^2}} & B_{1,\omega^2} = \ket{b_{1,\omega^2}}\bra{b_{1,\omega^2}} & B_{2,\omega^2} = \ket{b_{2,\omega^2}}\bra{b_{2,\omega^2}} \\
\end{array}
$$
and the corresponding decomposition for $A_1$ (similarly for $A_2,B_1,B_2$):
$$
A_1 = 1 \cdot \ket{a_{1,1}}\bra{a_{1,1}} + \omega \cdot \ket{a_{1,\omega}}\bra{a_{1,\omega}}
+ \omega^2 \cdot \ket{a_{1,\omega^2}}\bra{a_{1,\omega^2}}
$$
Under the assumption that the observables $A_i$s commute with the $B_j$s, the following equality 
holds:
$$
\bra{\phi} A_{i,\omega^k} B_{j,\omega^\ell} \ket{\phi} = P(\omega^k,\omega^\ell | A_{i}B_{j})
$$
for a state $\ket{\phi} \in \mathbb{H}$, $i,j \in \{1,2\}$ and $k,\ell \in \{0,1,2\}$. Thus one can rewrite the expression in \Cref{proba} as
\begin{equation}
\label{el-proj}
\begin{split}
 & \bra{\phi} A_{1,1}B_{1,1} + A_{1,1} B_{2,1} - A_{1,1} B_{1,\omega} - A_{1,1} B_{2,\omega^2} + A_{1,\omega}B_{1, \omega} + A_{1, \omega} B_{2, \omega} \\ 
      & -A_{1, \omega} B_{1, \omega^2} - A_{1,\omega} B_{2,1} +A_{1, \omega^2} B_{1, \omega^2} + A_{1, \omega^2} B_{2, \omega^2} - A_{1, \omega^2} B_{1,1} - A_{1, \omega^2} B_{2,\omega} \\
& + A_{2,1} B_{1,\omega} +  A_{2,1} B_{2,1}- A_{2,1} B_{1,1} -  A_{2,1} B_{2,\omega} + A_{2, \omega} B_{2, \omega} + A_{2,\omega} B_{1,\omega^2}   \\
& - A_{2, \omega} B_{1,\omega} - A_{2,\omega} B_{2, \omega^2} + A_{2, \omega^2} B_{1,1} + A_{2, \omega^2} B_{2, \omega^2} - A_{2, \omega^2} B_{1, \omega^2} - A_{2, \omega^2} B_{2,1} \ket{\phi}  \\
\end{split}
\end{equation}

In this case (commutative observables) we recall that the quantum bound for $I_3$ is $1 + \sqrt{{11}/{3}} 
\approx 2.9149$, see \cite{FCZ}, yielding a violation of $(\sqrt{{11}/{3}}+1) /2 \approx 1.4574$ for the CHSH-3.
In this paper, we use a semidefinite-programming-based strategy to compute upper bounds on the violation
of a special version of CHSH-3, which is described below in \Cref{ssec:freechsh3}.

\subsection{Free CHSH-3 inequality}
\label{ssec:freechsh3}
Let us describe the precise setting we are working on. Our goal is to consider a non-commutative version
of \Cref{el-proj}, and where the dimension of the Hilbert space the observables are operating on,
is not fixed {\it a priori}. That is we are interested in a {\it free} CHSH-3 inequality.

Whereas the standard setting consists of two parties (Alice and Bob) with four given observables,
two for each party ($A_1,A_2,B_1,B_2$ as previously discussed in \Cref{ssec:original}), our model consists
of one single party with four observables $X_1, X_2, X_3, X_4$, acting on states $\ket{\phi}$ living
in a Hilbert space $\mathbb{H}$ of unconstrained dimension. 

The observables $X_i$ are possibly not commuting to each other, they are unknown and will be explicitely 
constructed by solving a single semidefinite program, the details are given in \Cref{sec:sdp}. For each
$i \in \{1,2,3,4\}$, and $j \in \{0,1,2\}$, as in \Cref{ssec:original} we decompose each $X_i$ as follows:
$$
X_i = 1 \cdot X_{i,1} + \omega \cdot X_{i,\omega} + \omega^2 \cdot X_{i,\omega^2}, 
\,\,\,\, \text{ for } i \in \{1,2,3,4\}
$$
introducing $12$ variables $X_{i,\omega^k}, i \in \{1,2,3,4\}, k \in \{0,1,2\}$ corresponding to the
projector $\ket{x_{i,\omega^k}}\bra{x_{i,\omega^k}}$ on the eigenvector $\ket{x_{i,\omega^k}}$ of the $X_i'$s
(see \cite[Sec.\,2.2]{QCQI}). 

Therefore the CHSH-3 quadratic form can be formally restated as function of $X = (X_{1,1},X_{1,\omega},
\ldots,X_{4,\omega^2})$ and of the state $\ket{\phi}$ as \bigskip
\begin{equation}
\label{el-proj2}
\begin{split}
& \bra{\phi} f(X) \ket{\phi} \quad \quad \text{ with} \\
& f(X) = \\
 & = X_{1,1}X_{3,1} + X_{1,1} X_{4,1} - X_{1,1} X_{3,\omega} - X_{1,1} X_{4,\omega^2} + X_{1,\omega}X_{3, \omega} + X_{1, \omega} X_{4, \omega} \\ 
      & -X_{1, \omega} X_{3, \omega^2} - X_{1,\omega} X_{4,1} +X_{1, \omega^2} X_{3, \omega^2} + X_{1, \omega^2} X_{4, \omega^2} - X_{1, \omega^2} X_{3,1} - X_{1, \omega^2} X_{4,\omega} \\
& + X_{2,1} X_{3,\omega} +  X_{2,1} X_{4,1}- X_{2,1} X_{3,1} -  X_{2,1} X_{4,\omega} + X_{2, \omega} X_{4, \omega} + X_{2,\omega} X_{3,\omega^2}   \\
& - X_{2, \omega} X_{3,\omega} - X_{2,\omega} X_{4, \omega^2} + X_{2, \omega^2} X_{3,1} + X_{2, \omega^2} X_{4, \omega^2} - X_{2, \omega^2} X_{3, \omega^2} - X_{2, \omega^2} X_{4,1} \\
\end{split}
\end{equation}
where the previous products are non-commutative. Thus remark that for non-commutative operators,
$\bra{\phi} X_{i,\omega^k} X_{j,\omega^\ell} \ket{\phi}$ does not in general correspond to 
$P(\omega^k,\omega^\ell|X_{i}X_{j})$.


Let us also mention that since \Cref{el-proj2} reduces to \Cref{proba} assuming commutativity, one thereby
deduces the classical bound $f(X) \leq 2$ in a local realistic model. One cannot directly derive a quantum 
bound from results in the literature. In \Cref{sec:sdp} we construct explicit non-commutative operators 
$X_i$ yielding a gap of $2$ with respect to the classical bound.

\section{Explicit non algebraic violation of the free CHSH-3 inequality}
\label{sec:sdp}


\subsection{Semidefinite relaxations}
Semidefinite Programming (SDP for short) is a class of convex optimization problems that has gained
momentum in the last years. It is a natural generalization of linear programming consisting of the
minimization of linear functions over affine sections of the cone of positive semidefinite symmetric 
matrices. As for linear programs, efficient implementations of the interior-point method are available
in solvers such as \cite{SeDuMi,anderson2012mosek}.

SDP is a versatile tool that is used for {solving} non-convex polynomial optimization problems, that is,
for minimizing multivariate polynomial functions over sets defined by polynomial inequalities 
\cite{Anjos2012}. In \cite{lasserre01globaloptimization} Lasserre defined a hierarchy of SDP problems 
that can be constructed from the original one, and whose minima form an increasing sequence of lower 
bounds of the original optimal value, with asymptotic convergence. Under further conditions on the 
rank of the optimal matrices along the relaxation, the hierarchy converges in finite time to the sought 
solution and the minimizers can be extracted essentially by performing linear algebra operations 
\cite{LasserreSemialg}.

The SDP hierarchy has been extended to the non-commutative setting \cite{burgdorf2016optimization} and 
successfully applied to quantum information, see \cite{PiroSdp} and \cite[Ch.\,21]{Anjos2012}. 
The hierarchy in \cite{PiroSdp} allows
one to get bounds on the minimum or maximum of the action of a non-commutative polynomial function 
of observables, possibly subject to equalities and inequalities. The key idea of such a hierarchy is 
to linearize the quantity $\langle \phi, f(X) \phi \rangle$ where $f(X) = \sum_w f_w w(X)$ is a non-commutative polynomial function of
$n$ measurement operators $X = (X_1,\ldots,X_n)$ defined on a Hilbert space $\mathbb{H}$, $w(X)$ 
is a monomial on $X$, and $ \ket{\phi} \in \mathbb{H}$ is a pure state. The linearization consists of replacing 
the action $\langle \phi, w(X) \phi \rangle$ of the monomial $w(X)$ on the state $ \ket{\phi}$, with a new 
variable, or {\it moment}, $y_w$. In other words, one replaces the original non-linear operator on $X$ 
with the following linear function on the space of variables $y$:
\[
\langle \phi, f(X) \phi \rangle = \sum_w f_w \langle \phi, w(X) \phi \rangle = \sum_w f_w y_w.
\]
The moments $y_w$ up to some order $d$ are then organized in a symmetric multi-hankel {\it moment matrix}
$M_d(y) = (y_{vw})_{v,w}$ (that is, the entry of $M_d(y)$ indexed by $(v,w)$ is $y_{vw}$).
By construction of $y_w$, one gets the necessary condition that $M_d(y)$ is positive semidefinite, 
from the fact that $z^* M_d(y) z \geq 0$ for any complex vector $z = (z_w)$. Similarly, non-linear
constraints can be linearized and lead to additional linear and semidefinite constraints on variables
$X$ in the relaxation.

In the case of the CHSH-2 inequality for two space-like separated parties, many measurements settings with
two outcomes, the first level of the hierarchy is sufficient to compute Tsirelson's bounds \cite{steph}.
In this work, we use semidefinite programming in the spirit of \cite{steph,PiroSdp} to compute explicit
(non-commuting) observables yielding a violation of the CHSH-3 inequality higher than the known value of
$1+\sqrt{11/3}$.

\subsection{First relaxation of the free CHSH-3}
\label{ssec:firstrelax}
Let $X = (X_{1,1},X_{1,\omega},\ldots,X_{4,\omega^2})$ be the (unknown) projectors on the eigenstates of
operators $X_1,X_2,X_3,X_4$ related to eigenvalues $1, \omega, \omega^2$, as defined in \Cref{sec:chsh3}, 
and let $f(X)$ be the non-commutative quadratic polynomial defined in \Cref{el-proj2}. Since our goal is 
to compute the maximal violation of CHSH-3 with no dimensional constraints, we let $k_4,k_5,\ldots,k_d$ 
be the additional eigenvalues up to dimension $d$ (see for instance \cite[\S 2.2.6]{QCQI}) and similarly
we denote by $X_{i,k_j}$ the projectors onto the eigenstate corresponding to $k_j$, $j\in \{4,\ldots,d\}$.

Let us introduce the following compact notation for the indices of $X_{i,\mu}$.
We define the set
$
T = \{(i,\mu) \mid i = 1,2,3,4, \mu = 1,\omega,\omega^2,k_4,\ldots,k_d\}.
$
Hence the variables $X_{i,\mu}$ are exactly those of the form $X_\alpha$ with $\alpha=(i,\mu) 
\in T$ for some $i,\mu$. Thus the original problem can be stated as follows:
\begin{equation}
\label{orig_opt}
\begin{array}{rcll}
f^* := & \sup        & \bra{\phi} f(X) \ket{\phi}                     & \\
       & \text{s.t.} & \braket{\phi}{\phi} = 1                        & \\
       &             & X_{\alpha}X_{\beta} = \delta_{\mu\nu} X_{\alpha} & \text{ for } \alpha = (i,\mu), \beta = (i,\nu) \in T  \\ 
       &             & \sum_\mu X_{\alpha} = 1                         & \text{ for } i \in \{1,2,3,4\}, \text{ where } \alpha = (i,\mu)
\end{array}
\end{equation}
where $\delta_{\mu\nu}$ is the Kronecker delta for indices $\mu,\nu \in \{1,\omega,\omega^2,k_4,\ldots,
k_d\}$. The two last constraints are related to the equality $X_{i,\mu} = \ket{x_{i,\mu}}\bra{x_{i,\mu}}$ 
that we want to impose, as discussed above.

We denote by $y_\alpha = \bra{\phi} X_\alpha \ket{\phi}$ for $\alpha \in T$, the moment of order one
associated to the variable $X_\alpha$ and to state $\ket{\phi}$ (omitted in the notation). Similarly 
we denote by $y_{\alpha\beta} = \bra{\phi} X_\alpha X_\beta \ket{\phi}$ the moments of order two.
Note that we have $X_{\alpha} X_{\beta} = (X_{\beta}X_{\alpha})^\dagger  $,  because $X_\alpha$ are projectors (hence Hermitian). Therefore the expected values are conjugated each over.

The first moment relaxation of \Cref{orig_opt} is thus expressed in the following form
\begin{equation}
\label{mom_rel}
\begin{array}{rcll}
f_1^* := & \sup & \sum_{\alpha} c_\alpha y_\alpha & \\
         & \text{s.t.} & y_0 = 1                  & \\
& & y_{\alpha\beta} = \delta_{\mu\nu} y_{\alpha} & \text{ for } \alpha = (i,\mu), \beta = (i,\nu) \in T, i \in \{1,2,3,4\} \\
& & \sum_\mu y_{\alpha} = 1 & \text{ for } i \in \{1,2,3,4\}, \text{ where } \alpha = (i,\mu) \\
& & M_1(y) \succeq 0 & 
\end{array}
\end{equation}
where $c_\alpha \in \{-1, 0, 1\}$ are such that $f(X) = \sum_{\alpha} c_\alpha X_\alpha$, and
$M_1(y)$ is the moment matrix of order $1$, namely the matrix
$$
M_1(y) =
\bra{\phi}
v_1 v_1^{\dagger}
\ket{\phi}
=
\left[
\begin{array}{ccccc}
y_0          & y_{\alpha_1}         & y_{\alpha_2}         & \cdots & y_{\alpha_{4d}} \\
y_{\alpha_1} & y_{\alpha_1\alpha_1}       & y_{\alpha_1\alpha_2} & \cdots & y_{\alpha_1\alpha_{4d}} \\
y_{\alpha_2} & y_{\alpha_1\alpha_2} & y_{\alpha_2\alpha_2}       &        & \vdots \\
\vdots       & & & & \\
y_{\alpha_{4d}} & \cdots & & & y_{\alpha_{4d}\alpha_{4d}}
\end{array}
\right]
$$
Above we have chosen an order for indices $\alpha$ in $T = \{\alpha_1,\ldots,\alpha_{4d}\}$,
and denoted the vector of moments up to degree $1$ by $v_1 = (1, X_{\alpha_1}, X_{\alpha_2}, 
\cdots, X_{\alpha_{4d}}) \in \mathbb{C}^{4d+1}$. Problem \eqref{mom_rel} is a relaxation of 
Problem \ref{orig_opt} which implies that $f^* \leq f_1^*$.

For two symmetric matrices $C_1,C_2$, we denote by $C_1 \bullet C_2 = \text{Trace}(C_1 C_2)$ 
the usual Euclidean inner product. Let $C,A_0,A_{\alpha\beta},A_i$ be the $(1+4d) \times (1+4d)$ 
symmetric matrices such that 
$\sum_\alpha c_\alpha y_\alpha = C \bullet M_1(y)$, 
$y_0 = C_0 \bullet M_1(y)$,
$y_{\alpha\beta} - \delta_{\mu\nu} y_{\alpha} = A_{\alpha\beta} \bullet M_1(y)$ and
$\sum_\mu y_\alpha = A_i \bullet M_1(y)$. Thus the problem in \Cref{mom_rel} is equivalent to 
the semidefinite program
\begin{equation}
\label{mom_rel_sdp}
\begin{array}{rcll}
f_1^* := & \sup & C \bullet M_1(y) & \\
         & \text{s.t.} & C_0 \bullet M_1(y) = 1                  & \\
& & A_{\alpha\beta} \bullet M_1(y) = 0 & \text{ for } \alpha = (i,\mu), \beta = (i,\nu) \in T, i \in \{1,2,3,4\} \\
& & A_i \bullet M_1(y) = 1 & \text{ for } i \in \{1,2,3,4\}, \text{ where } \alpha = (i,\mu) \\
& & M_1(y) \succeq 0. &
\end{array}
\end{equation}


Solving this SDP with SeDuMi \cite{SeDuMi} gives a value of 
$$
f_1^* = 4.
$$

We remark that this value is not the algebraic bound of  the expression \eqref{el-proj2} while it is the one of \eqref{proba} (see \Cref{non algebraic}).

The $13 \times 13$ submatrix $M^*$ of the optimal moment matrix $M_1(y^*)$, corresponding to variables 
$X$ occurring in the CHSH-3 inequality, has the following form:
\begin{equation}
  \label{mom_mat}
  M^* = 
\frac{1}{9}
  \scalebox{0.85}{ 
    $\left[
  \begin{array}{rrrrrrrrrrrrr}
  9 & ${3}$ & ${3}$ & ${3}$ & ${3}$ & ${3}$ & ${3}$ & ${3}$ & ${3}$ & ${3}$ & ${3}$ & ${3}$ & ${3}$ \\
  ${3}$ 	& ${3}$ & $0$ & $0$ & $0$ & $3$ & $0$ & $2$ & $-1$ & $2$ & $2$ & $ 2$ & $-1$ \\
  ${3}$ & $0$ & $3$ & $0$ & $0$ & $0$ & $3$ & $2$ & $2$ & $-1$ & $-1$ & $2$ & $2$ \\
  ${3}$ & $0$ & $0$ & $3$ & $3$ & $0$ & $0$ & $-1$ & $2$ & $2$ & $ 2$ & $-1$ & $ 2$ \\
  ${3}$ & $0$ & $0$ & $3$ & ${3}$ & $0$ & $0$ & $-1$ & $2$ & $2$ & $2$ & $-1$ & $2$ \\
  ${3}$ & $3$ & $0$ & $0$ & $0$ & $3$ & $0$ & $ 2$& $-1$ & $2$ & $2$ & $2$ & $-1$ \\
  ${3}$ & $0$ & $3$ & $0$ & $0$ & $0$ & ${3}$ & $2$ & $2$ & $-1$ & $-1$ & $ 2$ & $2 $ \\
  ${3}$ & $2$ & $2$ & $-1$ & $-1$& $2$ & $2$  & $3$ & $0$ & $0$ & $0$ & $3$ & $0$ \\
  ${3}$ & $-1$ & $ 2$ & $ 2$& $2 $ & $-1$ & $2$ & $0$ & $3$ & $0$ & $0$ & $0$ & ${3}$ \\
  ${3}$ & $ 2$ & $-1$ & $2$ & $2$ & $2$ & $-1$ & $0$ & $0$ & ${3}$ & ${3}$ & $0$ & $0$ \\
  ${3}$ & $2$ & $-1$ & $2$ & $2$ & $2$ & $-1$ & $0$ & $0$ & ${3}$ & ${3}$ & $0$ & $0$   \\
  ${3}$ &$2$ & $2$ & $-1$ & $-1$ & $2$ & $2$ & ${3}$ & $0$ & $0$ & $0$ & ${3}$ & $0$ \\
  ${3}$ & $-1$ & $2$ & $2$ & $2$ & $-1$ & $2$ & $0$ & ${3}$ & $0$ & $0$ & $0$ & ${3}$
  \end{array}
  \right]$}
\end{equation}
The matrix $M^*$ has rank three and it is positive 
semidefinite, with eigenvalues $\frac{4}{3},\frac{7}{3}$ and $0$ of multiplicity $2,1$ and $10$,
respectively.

In order to retrieve the optimal projectors, we thus compute a factorization of $M^*$ of the form $M^* = B^TB$ (certifying that $M^* \succeq 0$), with $B$ the following $3 \times 13$ matrix

\begin{equation}
\label{matrixB}
\frac{\sqrt{3	}}{9}
  \scalebox{0.8}{
    $\left[
      \begin{array}{rrrrrrrrrrrrr}
        $3$ & 3 & 0 & 0 & 0 & 3 & 0 & 2 & -1 & 2 & 2 & 2 & -1 \\
        $3$ & 0 & 3 & 0 & 0 & 0 & 3 & 2 & 2 & -1 & -1 & 2 & 2\\
        $3$ & 0 & 0 & 3 & 3 & 0& 0 & -1 & 2 & 2 & 2 & -1 & 2 \\
      \end{array}
      \right]$
  }
\end{equation}

As in \cite[Ch.\,21]{Anjos2012}, the first column of $B$ is interpreted as the optimal state $\ket{\phi^*}$,
and for $i \in \{1,2,3,4\}$, the normalization of columns $3i-1,3i$ and $3i+1$ of $B$ as the eigenstates
$\ket{x_{i,1}},\ket{x_{i,\omega}}$ and $\ket{x_{i,\omega^2}}$ corresponding to projective measurements
$X^*_i$ that can be recovered as in \cite[\S 2.2.6]{QCQI}, as follows:
\begin{equation}
\label{operators}
X^*_i = 1 \cdot \ket{x_{i,1}}\bra{x_{i,1}} + \omega \cdot \ket{x_{i,\omega}}\bra{x_{i,\omega}}
+ \omega^2 \cdot \ket{x_{i,\omega^2}}\bra{x_{i,\omega^2}} . 
\end{equation}
We thus have :

\begin{equation}
\begin{array}{cl}
X_1^*= Z= \scalebox{0.8}{
    $ \left[
      \begin{array}{rrr}
        1 & 0& 0  \\
        0 & \omega & 0 \\
        0 & 0 & \omega^2  \\
      \end{array}
      \right]$
  } &  X_2^*= \scalebox{0.8}{
    $ \left[
      \begin{array}{rrr}
        \omega & 0& 0  \\
        0 & \omega^2 & 0 \\
        0 & 0 & 1  \\
      \end{array}
      \right]$
  }  \\ \\
  X_3^*= \frac{1}{3} \scalebox{0.8}{
    $  \left[
      \begin{array}{rrr}
        - \omega  & 2 & 2 \omega^2  \\
        2 & - \omega^2 & 2 \omega \\
       2 \omega^2 & 2 \omega & -1  \\
      \end{array}
      \right]$
  } & X_4^*=\frac{1}{3} \scalebox{0.8}{
    $  \left[
      \begin{array}{rrr}
        - \omega^2 & 2 \omega & 2  \\
        2 \omega & -1 & 2 \omega^2 \\
        2 & 2 \omega^2 & -\omega  \\
      \end{array}
      \right]$
  } 
\end{array}
\end{equation}

We prove the following result concerning the relaxation in \Cref{mom_rel_sdp}.
\begin{thm}
\label{main}
The optimal value of Problem \eqref{orig_opt} is $4$ and it is attained for the
configuration in \Cref{operators} and for $\ket{\phi^*}=(1/ \sqrt{3}) (1,1,1)^\dagger$.
\end{thm}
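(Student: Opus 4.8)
The plan is to prove the two inequalities $f^* \ge 4$ and $f^* \le 4$ separately: the lower bound by certifying that the explicit configuration is feasible and attains the value $4$, and the upper bound through the relaxation \eqref{mom_rel_sdp}.

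First I would establish the lower bound and attainment. The point is to check that $X_1^*, X_2^*, X_3^*, X_4^*$ are admissible observables for \eqref{orig_opt}, i.e. that each is normal with simple spectrum $\{1,\omega,\omega^2\}$. For $X_1^*$ and $X_2^*$ this is immediate since they are diagonal; for $X_3^*$ and $X_4^*$ it suffices to verify that the characteristic polynomial equals $\lambda^3-1$, equivalently that $\mathrm{tr}(X_i^*)=\mathrm{tr}((X_i^*)^2)=0$ and $\det(X_i^*)=1$. Their spectral projectors $X_{i,\mu}^* = \ket{x_{i,\mu}}\bra{x_{i,\mu}}$ then automatically satisfy the idempotency/orthogonality relations $X_\alpha X_\beta = \delta_{\mu\nu}X_\alpha$ and the completeness relation $\sum_\mu X_{i,\mu}^* = \Id$, so the pair $(\ket{\phi^*},X^*)$ is feasible. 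It then remains to evaluate the objective: reading the columns of $B$ in \eqref{matrixB} as the vectors $X_{i,\mu}^*\ket{\phi^*}$, one checks that the second-order moments $\bra{\phi^*} X_\alpha^* X_\beta^* \ket{\phi^*}$ reproduce exactly the entries of $M^*$ in \eqref{mom_mat} (equivalently, one substitutes the explicit matrices into $f$ and computes directly). Hence $\bra{\phi^*} f(X^*)\ket{\phi^*} = C \bullet M^* = 4$, giving $f^* \ge 4$.

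For the reverse inequality I would use that \eqref{mom_rel_sdp} is a relaxation: any feasible $(\ket{\phi},X)$ induces a feasible moment sequence with identical objective, so $f^* \le f_1^*$, and it therefore suffices to certify $f_1^* = 4$ rather than merely read it off the solver. The primal point $M^*$ already gives $f_1^* \ge 4$; for $f_1^* \le 4$ I would exhibit an exact dual feasible solution, i.e. multipliers $\lambda_0,\{\mu_{\alpha\beta}\},\{\nu_i\}$ for the three families of equality constraints such that the dual slack $S = \lambda_0 C_0 + \sum \mu_{\alpha\beta} A_{\alpha\beta} + \sum_i \nu_i A_i - C$ is positive semidefinite and $\lambda_0 + \sum_i \nu_i = 4$; weak duality then yields $f_1^* \le 4$. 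Complementary slackness is the guide: since $M^*$ has rank three, the optimal slack must satisfy $S M^* = 0$, which pins down the range of $S$ and lets one solve for the multipliers and verify $S \succeq 0$ by exhibiting a Gram/Cholesky factorization with exact entries. A point requiring care is that \eqref{orig_opt} imposes no bound on $\dim \mathbb{H}$, whereas the certificate is built for a fixed truncation; I would argue that the bound is dimension-free by noting that $C$ is supported on the $13\times 13$ principal block of $M_1(y)$ indexed by the constant and the twelve projectors $X_{i,\mu}$ with $\mu\in\{1,\omega,\omega^2\}$, so positive semidefiniteness of the full $(1+4d)\times(1+4d)$ matrix forces this block to be positive semidefinite for every $d$, and the dual slack padded with zeros certifies $f_1^* \le 4$ uniformly in $d$. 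Combining the two bounds gives $f^* = 4$, attained by $(\ket{\phi^*},X^*)$.

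The main obstacle is the upper bound: converting the floating-point output of SeDuMi into an exact, machine-checkable dual certificate and verifying $S \succeq 0$ symbolically, together with the argument that a single certificate is valid uniformly in the unbounded dimension $d$. By contrast, the lower bound is only a finite verification, tedious but routine.
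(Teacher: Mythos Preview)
Your overall strategy matches the paper's exactly: the lower bound by verifying that the explicit configuration $(\ket{\phi^*},X^*)$ is feasible for \eqref{orig_opt} and evaluates to $4$, and the upper bound via the chain $f^* \le f_1^*$ together with $f_1^* = 4$. The paper's proof is in fact just these two lines; it simply asserts $f_1^* = 4$ on the basis of the SeDuMi output and does not produce an exact dual certificate or address the uniformity in $d$.

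The differences are that you go further on rigor. Your proposal to extract an exact dual slack $S \succeq 0$ with $S M^* = 0$ and rational multipliers, and to argue that the certificate is dimension-free because the objective is supported on the $13\times 13$ block, are genuine strengthenings that the paper omits. What this buys you is a proof that does not rest on floating-point output; what the paper's version buys is brevity. Neither changes the architecture of the argument, so your plan is correct and aligned with the paper, only more careful.
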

\begin{proof}
First, we remark that the operators constructed in \Cref{operators} satisfy the constraints in
Problem \eqref{orig_opt}, which yields 
$$
4 = C \bullet M_1(y^*) = \bra{\phi^*} f(X^*) \ket{\phi^*} \leq f^*.
$$ 
Moreover \Cref{mom_rel} is a relaxation of \Cref{orig_opt}, hence, the feasible set in \Cref{mom_rel} contains that of \Cref{orig_opt}  that is, $f^* \leq f_1^* = 4 $,
and we conclude.
\end{proof}

\subsection{Algebraic and non algebraic bound of the free CHSH-3 inequality} $ $ \\
\label{non algebraic}
The initial expression \eqref{proba} of CHSH-3 and that of free CHSH-3 \eqref{el-proj2} are not equivalent, precisely when it comes to non commuting observables simply because \eqref{proba}  does not exist. Thus, they are not supposed to have the same algebraic bound. \\
Moreover, considering the free CHSH-3 expression  \eqref{el-proj2}, we have 24 terms of the form $ \bra{\phi} X_{i, \omega^\ell} X_{j ,\omega^k} \ket{\phi}; $   with absolute value of $1$ $\quad (i, \, j \in \{1,...,4 \}, \; k, \, l \in \{0,1,2\} )$.  Therefore, the algebraic bound of the free CHSH-3 expression is $24$. 

\section{ The protocol}
\label{sec:protocol}

First of all, we present the properties of the states and measurements previously deduced from the optimisation. This properties allows us to design a randomness generator.
We close by showing the security of this protocol based on the retrieval of the algebraic value of CHSH-3 expectation.

\subsection{Measurements and states properties }
\label{property}
Since the measurements $X_1^* $ and $X_2^* $ (the same for $X_3^* $ and $X_4^* $) commutes, the next five facts are equivalents : 
\begin{itemize}
\item[a)] Use the measurement whose matrix is $X_1^* X_2^* = X_2^* X_1^*$. 
\item[b)] First measure $ X_2^*$ and then measure $ X_1^*$. 
\item[c)] First measure $ X_1^*$ and then measure $ X_2^*$. 
\item[d)] Only measure $X_1^*$. 
\item[e)] Only measure $X_2^*$. 
\end{itemize} 

The equivalence holds because 
$$ \ket{x_{1, \omega^k}} = \ket{x_{2, \omega^{k+1}}} = \ket{(x_1x_2)_{\omega^{2k+1}}}$$
 where $ \ket{(x_1x_2)_{\omega^{2k+1}}}$ is the eigenvector of $X_1^*X_2^* $ related to the eigenvalue $ \omega^{2k+1} $ and $ k$, $ k+1$ $2k+1 $ are taken modulo $3$ .
 The difference only relies on the outcome returned by the measurement : 
$$ \begin{array}{lcl}
 X_1 \mapsto \omega^k  & &  X_1X_2 \mapsto \omega^k \omega^{k+1}  \\
 
 X_2 \mapsto \omega^{k+1}  & &  X_2 \quad then \quad  X_1 \mapsto \omega^k \; or \; \omega^{2k+1}
 \end{array}$$

The previous reasoning can be applied to $ X_3 X_4$.  \\ 
We also have to notice that measurements $X_1 $ and $ X_2$ do not commute with measurements $X_3 $ and $ X_4$.
\bigskip
\subsubsection*{Randomness generation}
We also notice that the measurement of the state $ \ket{ \phi^*} $ in the base $ X_i$,  $ i \in \{ 1, 2, 3, 4\} $ yields : 
$$
\begin{array}{rclc}
1 & \text{ with probability } & P(1 \mid (\ket{ \phi^*}, X^*_i)) = |\braket{\phi^*}{x_{i,1}}|^2 & = 1/3\\
\omega & \text{ with probability } & P(\omega \mid (\ket{ \phi^*},X^*_i)) = |\braket{\phi^*}{x_{i,\omega}}|^2 & = 1/3 \\ 
\omega^2 & \text{ with probability } & P(\omega^2 \mid (\ket{ \phi^*},X^*_i)) = |\braket{\phi^*}{x_{i,\omega^2}}|^2 & = 1/3 \\
\end{array}
$$
In that respect, a randomness generator based on this states and measurements will have quality given by the min-entropy of (see \cite{konig2009operational})
$$
H_{\infty} = -\log_3 \, \max_{\ell , i} \, P(\omega^\ell \mid (\ket{ \phi^*}, X^*_i)) = -\log_3 \, 1/3 = 1
$$
concluding that the min-entropy for each trit is thus equal to $1$ trit.
We use this fact to construct the following protocol. \\

\subsection{Protocol execution}
\label{ssec:Proto}
Let's consider the state and measurements $ \ket{ \phi^*} , \, X_1^*, X_2^*, X_3^*, \allowbreak X_4^*$ as defined in \Cref{ssec:firstrelax}. We assume that a public source of random numbers is available, such as that of NIST\footnote{{https://csrc.nist.gov/projects/interoperable-randomness-beacons}}.
Based on previous discussion, let us now describe how our protocol works in practice to generate a random trit. The following steps are iterated:
\begin{enumerate}
\item
  The user uniformly choose a random couple  of measurements $( X_i^*,X_j^* )$, $ i,j \in \{ 1,2,3,4 \} $ .
\item
  If $ i,j \in \{ 1,2\} $ or $ i,j \in \{ 3,4\} $ then we apply the measurement $X_i^* $ to the state $ \ket{\phi^*} $. The outcome is returned as \textbf{random trit}.
\item
  Otherwise, we do the measurement $X_j$ on the state $ \ket{\phi^*}  $. We collect the resulting state $ \ket{x_{j, \omega^k}} $, \; $k \in \{ 0,1,2 \} $. This state is then measured in the basis $ X_i$ and the resulting state $ \ket{x_{i, \omega^\ell}} $, \; $\ell \in \{ 0,1,2 \} $.
  This outcomes are stored and used to evaluate the expectation of free CHSH-3 as explained in \Cref{vioBell}.
\end{enumerate}

\subsection{Maximal value of Bell expectation}
\label{ssec:secu}

Here, we can evaluate the expectation of CHSH-3 $\bra{\phi} f(X) \ket{\phi}$ on the optimal configuration $(X^*,\ket{\phi^*})$ computed 
in \Cref{sec:sdp} through the SDP relaxation, getting the maximum violation of $4$. This implies that one can detect the potential
interference of an eavesdropper if such violation is not attained.
\label{vioBell}
To evaluate this free CHSH-3 expectation, we use the outcomes of our protocol described in section \ref{ssec:Proto}.
The maximum free CHSH-3 expectation  is attained for the configuration (\ref{matrixB}) that gives moments  $ y_{(i, \omega^\ell)(j, \omega^k)}^*  $ that can be experimentally estimated. In fact we have for any moment $ y_{(i, \omega^\ell)(j, \omega^k)} $ : 

\begin{equation*}
\begin{split}
y_{(i, \omega^\ell)(j, \omega^k)} &= \bra{\phi} X_{i, \omega^\ell} X_{j ,\omega^k} \ket{\phi}  \\
  &= \braket{\phi}{x_{i, \omega^\ell}} \braket{x_{i, \omega^\ell}}{x_{j, \omega^k}} \braket{x_{j, \omega^k}}{ \phi}
\end{split}
\end{equation*}

Thus, we got
\begin{equation*}
\begin{split}
\mid y_{(i, \omega^\ell)(j, \omega^k)}  \mid &= \sqrt{ \mid \braket{\phi}{x_{i, \omega^\ell}} \mid ^2 } \;  \sqrt{ \mid \braket{x_{i, \omega^\ell}}{x_{j, \omega^k}} \mid^2 } \\ & \quad \; \sqrt{ \mid \braket{x_{j, \omega^k}}{ \phi}  \mid ^2}
\end{split}
\end{equation*}

where $\mid \langle \phi \mid x_{i, \omega^{ \ell}}  \rangle \mid ^2 =\bra{\phi} X_{i,\omega^{ \ell}} \ket{\phi} =  P( \ket{x_{i, \omega^\ell}} \mid( \ket{\phi} , X_i) ) $ is the probability  to retrieve the state $ \ket{x_{i, \omega^{ \ell}}} $ when it's about to measure the state $ \ket{\phi} $ in the basis $ X_i$.  We thus have : 

\begin{equation*}
\begin{split}
\mid y_{(i, \omega^\ell)(j, \omega^k)}\mid &= \sqrt{ P( \ket{x_{i, \omega^\ell}} \mid( \ket{\phi} , X_i) ) }  \\\ 
& \quad \times \sqrt{ P( \ket{x_{j, \omega^k}} \mid( \ket{x_{i, \omega^\ell}} , X_j) ) } \\ & \quad \; \times \sqrt{ P( \ket{x_{j, \omega^k}} \mid( \ket{\phi} , X_j) )}
\end{split}
\end{equation*}

The good value can be retrieved with the relation 
$$ \sum_k y_{(i, \omega^\ell)(j, \omega^k)} =   \sum_k \bra{\phi} X_{i, \omega^\ell} X_{j ,\omega^k} \ket{\phi}  = \bra{\phi} X_{i, \omega^\ell} \ket{\phi} = P( \ket{x_{i, \omega^\ell}} \mid( \ket{\phi} , X_i) )     $$

\noindent In so doing, we can experimentally evaluate this moment. 
Indeed the probabilities involved in previous expression can be deduced from the different outcomes of the step three of our protocol. We can thus compute the Bell expectation of the protocol. Therefore, if the state and measurements of the protocol are those presented according to the table (\ref{matrixB}), we get the maximal Bell expectation $4$. \\

 In the following section we will show what can be
deduced in the case of maximal value of the Bell expression.

\bigskip
\subsection{ \textbf{Self Testing arguments}} $ $ \\

We want to give a witness of the quality of the generated randomness, depending on the Bell expectation of the experiment, providing the fact that the device is honest but error prone. In fact, we can exhibit a relationship between the Bell expectation and the lower bound of the quantity of randomness produced under quantum assumption. This relation, using the outcome statistics, helps us to estimate the quality of the generated numbers. \\
To do so, we proceed by the way explained in \cite{randomKcbs} and adapted to our context. Here, we give a lower bound of the min-entropy as a function of the Bell Expectation :  For a given configuration $ ( X_1, ... , X_4, \ket{\phi}) $ the min entropy is given by 

$$ H_{min} ( X_1, ... , X_4, \ket{\phi}) = \log_3 \, \max_{\ell,i}   P(\omega^\ell \mid (\ket{ \phi},X_i)) $$

where $ \ell \in \{0,1,2\}, \, i \in \{ 1,2,3,4 \} \;$,  $ P(\omega^\ell \mid (\ket{ \phi},X_i)) = \bra{ \phi} X_{i,\ell} \ket{\phi } $

We want to find a lower bound of the min-entropy for a given free CHSH-3 expectation $ L $. This bound must hold for any configuration reaching this Bell value $ L $. This is equivalent to solve the following problem

\begin{equation}
\label{certi-eq}
\begin{array}{rlll}
\max_{i,\ell} \;  &          \bra{\phi} X_{i,\ell} \ket{\phi} & \\
\text {s.t.}       & \bra{\phi} f(X) \ket{\phi}= L             & \\
 & \text{ the same constraints as in }  \eqref{orig_opt} & \\
\end{array}
\end{equation} 

where $X=(X_1,...,X_4) $. We solve them according to same method as in \eqref{orig_opt}. In practice, for this value $L $ we optimize each moment of order $1$. And then, we take the maximum of this values : " max of max ". Doing it repeatedly for different values of $ L$, the following curve is obtained:  \\ 
\begin{center}
\begin{table} [h]

\includegraphics[scale=0.21]{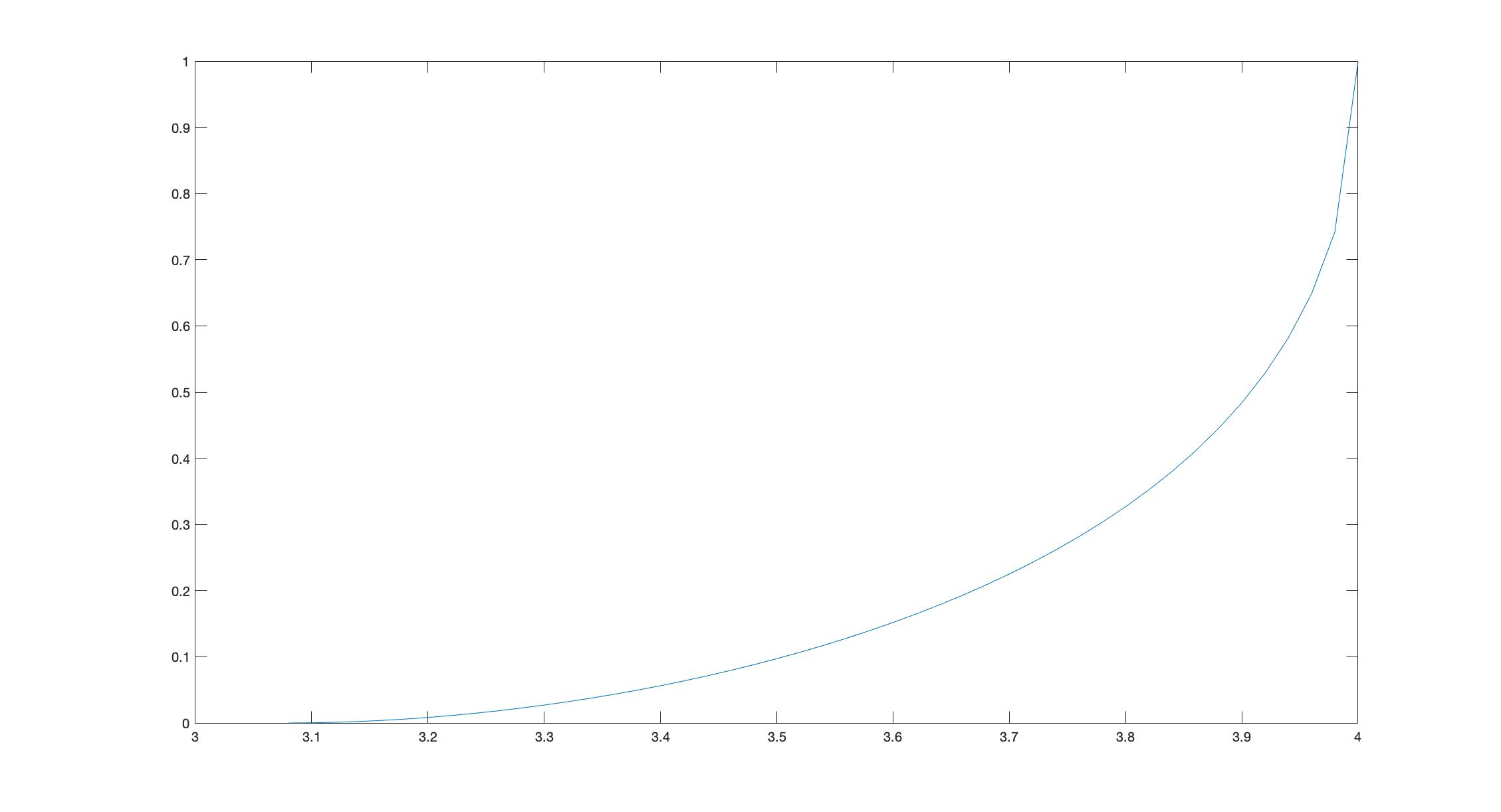}
\caption{ The min-entropy bound f(L) versus different levels L of the Free CHSH-3 violation}
\label{certif}
\end{table}

\end{center}

The previous curve, as in \cite[Figure 1]{randomKcbs}, reaches the maximum entropy only for the highest Bell value $4$. Unlike the previous reference, when the value $L$ is greater than the classical CHSH-3 bound $2$, the min entropy remains null. This until the value $ L>3.08$. \\
In our context, under quantum conditions, the figure \ref{certif} shows us that obtaining the maximum Bell value $4$ is equivalent to obtain a min-entropy  $ H_{min} = 1\, trit$ for each random trit produced.  \\

\bigskip

A further work could be to estimate values $\epsilon$  leading to a valid protocol when expectation $4-\epsilon $ is reached.

\section{Conclusion}

\noindent
To conclude, we have shown a quantum randomness generator GABRIEL based on the observables and states deduced from the optimisation of a free version of CHSH-3. In our case we removed the constraints of commutativity and dimension of observables. It yields a greater bound, $4 $, than the one in commutativity an dimension constrained case $ 1+ \sqrt{11/3} $.

We thus deduces from the optimum matrix of moments the state and measurements, which can generate randomness. The reliability of this protocol relied on the fact that we have the min-entropy as a function of the free CHSH-3 expectation \Cref{certif}.  From this, we see that, reaching the maximal Bell expectation is equivalent to have the maximal entropy $ H_{min} =1 $ for each trit produced.


\bigskip
\bigskip

\end{document}